\documentclass{article}

\usepackage{arxiv}
\usepackage[utf8]{inputenc}
\usepackage{tikz,pgfplots}
\usepackage{graphicx}
\usetikzlibrary{shapes.geometric, arrows.meta, positioning}
\usetikzlibrary{shadows.blur}
\usetikzlibrary{decorations.shapes}
\usetikzlibrary{decorations.pathmorphing}
\usepgfplotslibrary{fillbetween}
\pgfplotsset{compat=newest}
\usepackage[
    natbib=true,
    style=numeric,
    sorting=none
]{biblatex}
\usepackage{biblatex}
\usepackage[english]{babel}
\usepackage{csquotes}
\usepackage[T1]{fontenc}    
\usepackage{hyperref}
\hypersetup{
    colorlinks=true,
    linkcolor=red,
    filecolor=magenta,      
    urlcolor=blue,
    pdftitle={Overleaf Example},
    pdfpagemode=FullScreen,
    }
\usepackage{authblk}
\usepackage{booktabs}       
\usepackage{amsfonts}       
\usepackage{nicefrac}       
\usepackage{microtype}      
\usepackage{lipsum}		
\usepackage{amsmath}

\usepackage{doi}
\usepackage{dcolumn}
\usepackage{bm}
\usepackage{amssymb}
\newtheorem{theorem}{Theorem}

\newtheorem{corollary}{Corollary}
\newtheorem{definition}{Definition}

\title{Further Results on Null and Force-Free Electromagnetic Fields}

\author[1,\thanks{gmenon@troy.edu}]{Govind Menon} 
\author[2,\thanks{radhikari@troy.edu}]{Rakshak Adhikari} 
\affil[1,2]{Center for Relativity and Cosmology\\ Troy University, Troy, AL 36082}

\begin{document}
\maketitle
\begin{abstract}
The theory of Force-Free Electrodynamics (FFE) provides a robust framework for modeling the magnetospheres of compact objects, where the electromagnetic field's energy density dominates the surrounding plasma. Central to this theory is the existence of two-dimensional integral submanifolds, or field sheets, which foliate the spacetime. While it is established that every null force-free field possesses an associated $2$-D null geodesic foliation (see \cite{Menon:2020hdk} for details), the converse—identifying which null geodesic congruences can support a force-free solution—remains a non-trivial computational challenge.
\vskip0.2in
In this paper, we extend the foliation-based approach to null FFE by addressing two primary obstacles to the existence of a solution: the equipartition of null mean curvature and the involutivity of the field sheet distribution. We prove a general existence theorem demonstrating that for any given null geodesic congruence, there always exists a local rotation of a $2$-D basis transverse to the geodesic congruence that satisfies the equipartition condition.  Furthermore, we establish that a shear-free null geodesic congruence is sufficient to guaranty the existence of an arbitrary function of three variables such that any choice of such a function will generate a null field sheet foliation. Additionally, each unique foliation will be associated with a null force-free field that further contains an arbitrary function of two variables.
\vskip0.2in
These results are formally linked to the vanishing of the shear tensor, providing a coordinate-independent geometric criterion for the existence of null FFE solutions. We illustrate these theorems with explicit examples in Schwarzschild and Kerr geometries and present new, non-trivial exact null solutions in flat spacetime and for the C-metric.

\end{abstract}

\keywords{Force-Free , Shear-free congruences, Foliations \and Black holes.}

 \section{Introduction}
Force-free electrodynamics (FFE) provides the essential framework for modeling highly magnetized plasmas in extreme astrophysical environments, where the electromagnetic field energy density dominates over matter contributions. This regime is particularly relevant for understanding the magnetospheres of neutron stars, pulsars, and accreting black holes, where plasma inertia and pressure can be effectively neglected compared to electromagnetic stresses.

The governing equations of force-free electrodynamics were employed in the pioneering work of Goldreich and Julian \cite{Goldreich:1969sb} for pulsar magnetospheres, followed by the seminal paper of Blandford and Znajek \cite{BZ77}, which demonstrated that force-free plasma surrounding a rotating black hole can extract rotational energy via electromagnetic Poynting flux. 

Despite its foundational role in plasma astrophysics, the nonlinear structure of the force-free electrodynamics equations has long obstructed the systematic construction of exact solutions. Progress in elucidating magnetospheric dynamics has therefore largely depended on numerical approaches and perturbative schemes. Nonetheless, closed-form solutions remain indispensable, both as stringent tests of numerical methods and as vehicles for developing analytic understanding of the underlying physics and parameter dependence.

An important breakthrough in this direction came with the recasting of FFE within the framework of exterior calculus and differential geometry. The work of Gralla and Jacobson \cite{Gralla:2014yja} provided a fully covariant geometric framework that revealed the intrinsic mathematical structures underlying force-free fields. Building on this foundation, a foliation-based approach was developed \cite{Menon:2020hdk} that exploits a key geometric property: the kernel of the electromagnetic field tensor F forms an involutive distribution, naturally giving rise to a foliation of spacetime by two-dimensional submanifolds called field sheets. This geometric perspective has proven remarkably fruitful, yielding new exact solutions in both Kerr and FLRW spacetimes \cite{Adhikari:2023hhk, Adhikari:2024hre} as well as spacetimes where the metric is partially undetermined \cite{Adhikari:2025rya}. See also \cite{Compere:2016xwa} for a related foliation-based approach in the stationary, axisymmetric case.

The central result of the earlier foliation framework \cite{Menon:2020hdk} established that null force-free electromagnetic fields necessarily generate a special class of foliations characterized by an equipartition of null mean curvature termed null field sheet foliations. However, this result naturally raises the converse question: given a null geodesic congruence in a spacetime, under what conditions does there exist a corresponding null force-free electromagnetic field? This inverse problem is of both theoretical and practical importance, as it provides a systematic method for constructing FFE solutions directly from the geometric properties of null congruences.

The difficulty in addressing this converse problem lies in two distinct geometric requirements. First, for a given null geodesic congruence and an associated pair of orthogonal spacelike vectors, the equipartition condition—requiring that the null mean curvature distributes equally between the two orthogonal directions—may not be satisfied. Second, even when the equipartition condition holds, the resulting pair of vector fields may fail to form an involutive distribution, which is necessary for the existence of field sheet foliations.

In this paper, we systematically address both obstacles and provide complete answers. Our first main result demonstrates that the equipartition condition can always be satisfied: for any null geodesic congruence, there exist local orthonormal vector fields, obtained through an appropriate rotation, such that the equipartition of null mean curvature holds. This result removes the first barrier to constructing null FFE fields from arbitrary null congruences.

Our second main result addresses the involutivity requirement. We prove that when a null geodesic congruence admits uniform equipartition of null mean curvature, a slightly stronger condition, there exists a three-parameter family of null field sheet foliations, each associated with a two-parameter family of null force-free electromagnetic fields. The rotation angle that produces these involutive foliations is explicitly determined by an integration involving the commutator of the null and spacelike vectors.

A key geometric insight emerging from our analysis is the central role played by the shear tensor of the null congruence. We establish that the equipartition of null mean curvature is equivalent to the vanishing of the shear tensor when pulled back to the leaves of the foliation. Furthermore, uniform equipartition corresponds precisely to the complete vanishing of the shear tensor. This connection leads to an elegant corollary: any shear-free null geodesic congruence automatically admits a three-parameter family of null field sheet foliations and their associated force-free fields.

These results provide a systematic geometric framework for constructing null force-free solutions. Rather than searching for solutions to the nonlinear FFE equations directly, one can instead identify suitable null geodesic congruences with prescribed geometric properties and apply our constructive theorems. We illustrate the power of this approach through explicit examples in Schwarzschild spacetime, Kerr spacetime, and Minkowski space, demonstrating both the applicability of our theorems and the variety of solutions they generate.

The paper is organized as follows. In Section 2, we review the essential results from previous work on null foliations and force-free fields. Section 3 presents our main theorems on equipartition and involutivity, along with detailed examples in various spacetimes. We conclude by discussing the geometric significance of the shear tensor and its role in characterizing null force-free solutions.
 \section{Previous Results}

 This body of work extends the previous results on the theory of null and Force-Free Electrodynamics, henceforth FFE, developed in \cite{Menon:2020hdk}. There is a rich geometric structure underlying the theory of FFE. We begin with a quick survey of the aforementioned results as it pertains to null solutions.

In general relativity, the electromagnetic field is naturally described by a two-form $F$ satisfying
\begin{equation}
dF = 0 \;, \qquad *\, d * F = j \; ,
\end{equation}
where $d$ is the exterior derivative, $*$ denotes the Hodge star, and $j$ represents the current density. Force-free electrodynamics corresponds to the additional condition
\[
F(j^\sharp,\cdot)=0 \; .
\]

Here $\sharp$ denotes the musical isomorphism $T^*({\cal M}) \to T({\cal M})$, defined by
\[
j^\sharp(f)= g^{\mu\nu} j_\nu \,\partial_\mu(f)
\]
for any smooth function $f$ on the spacetime manifold ${\cal M}$, with $\flat$ denoting its inverse. At any point $p\in{\cal M}$, we introduce the scalar invariant
\[
F^2(p) := F_{\mu\nu}F^{\mu\nu}(p) \; .
\]
The electromagnetic field is magnetically dominated when $F^2(p)>0$, electrically dominated when $F^2(p)<0$, and null when $F^2(p)=0$. In recent years, the authors have constructed smooth solutions that exhibit transitions between these regimes. The remainder of this work is devoted to the study of null force-free configurations.

The kernel of $F$, denoted by $\ker F$, is a 2-dimensional subspace of the tangent bundle that satisfies the property that $i_v F =0$ whenever $v \in \ker F $. For $v, w \in \ker F$,
$$i_{[v,w]} F= [{\cal L}_v, i_w] F =0$$
since $F$ is a closed 2-form. Here, ${\cal L}_v$ denotes the Lie derivative with respect to the vector field $v$.
Therefore, $\ker F$ is an involutive distribution, meaning that whenever vector fields $v,w \in \ker F$, then $[v,w] \in \ker F$. Consequently, Frobenius' theorem implies that when a force-free $F$ exists on ${\cal M}$, spacetime can be foliated by 2-dimensional integral submanifolds of the distribution spanned by $\ker F$. The leaves of the foliation, which are the integral submanifolds of $\ker F$,  will be denoted as $ {\cal F}_a$. Here $a$ belongs to some indexing set $A$. The key points here are that
$$ {\cal F}_a \cap {\cal F}_b =0 \;{\rm whenever}\;a \neq b \in A\;,\;\;\;\cup_{a\in A} \;{\cal F}_a ={\cal M}\;,$$
and whenever $v\in T({\cal F}_a) $ for any $a \in A$ we have that $i_v F=0$. Following Gralla and Jacobson \cite{Gralla:2014yja}, ${\cal F}_a$ will be referred to as field sheets. It was shown in \cite{Menon:2020hdk} that for a null force-free field, $T({\cal F}_a)$ consists of a unique null geodesic congruence denoted by $l$. Furthermore, the metric when restricted to $T({\cal F}_a)$ is degenerate. I.e., $T({\cal F}_a)$ is locally spanned by a null geodesic vector field $l$ and a spacelike vector field $s$ that is orthogonal to $l$.

As shown in \cite{Menon:2020npo}, there exists a local adapted coordinate chart $(x^1, \dots, x^4)$ such that the slices given by constant values of 
$x^3$ and $x^4$ are indeed the field sheets. In such a chart

\begin{equation}
   F= f(x^3, x^4) \;dx^3 \wedge dx^4\;.
   \label{adaptedF}
\end{equation}

\vskip0.2in
\begin{definition}
    A foliation of ${\cal M}$ by $2$-dimensional submanifolds whose tangent space is spanned by a null geodesic vector field and an orthogonal spacelike vector field is referred to as a null foliation of ${\cal M}$ and is denoted by ${\cal F}_{2,N}$.
\end{definition}
\vskip0.2in
We complete the tetrad by picking additional vector fields $\alpha$ and $n$ such that 

\begin{equation}
  g(s,s)=g(\alpha, \alpha)=-g(n, l)=1, \;\;g(n,n)=
g(n, s)=g(n,\alpha)=g(\alpha, s)= g(\alpha, l) =0\;. 
\label{innerprod}
\end{equation}
\vskip0.2in

\begin{definition}
 We shall refer to the tetrad $(s, l, \alpha, n)$ with the above mentioned properties as a foliation-adapted frame for a null foliation ${\cal F}_{2,N}$.   
\end{definition}
\vskip0.2in
The null mean curvature, or the null expansion scalar, $\theta$, for the congruence generated by $l$ is given by
\begin{equation}
   \theta= \frac{1}{2} \;\Big[\;g(\nabla_s\; l, \;s)+g(\nabla_{\alpha} \;l, \;\alpha)\;\Big]\;. 
   \label{nmcdef}
\end{equation}
\vskip0.2in
\begin{definition}Let $(s, l, \alpha, n)$ be a foliation-adapted frame for a null foliation  ${\cal F}_{2,N}$. 
If
$$\theta= g(\nabla_s\; l, \;s)\;,$$
we say that ${\cal F}_{2,N}$ admits an equipartition of null mean curvature. This is equivalent to the requirement
$$g(\nabla_s\; l, \;s) = g(\nabla_\alpha\; l, \;\alpha)\;.$$
\label{equipartdef}
\end{definition}
\vskip0.2in
Clearly, from the arguments above, to every null force-free electrodynamic field, there is a corresponding null foliation of spacetime. The following theorem, which is one of the central results of \cite{Menon:2020hdk} provides the condition for the converse of the above statement. 
\vskip0.2in
 \begin{theorem}
Let ${\cal F}_{2,N}$ be a null foliation of a fixed, electrically neutral background spacetime ${\cal M}$, and let $(s, l, \alpha, n)$ be a null foliation adapted frame for ${\cal F}_{2,N}$. Then there exists a class of null force fields $F$ of the form
$$F=u \cdot \kappa \;\alpha^\flat \wedge l^\flat$$
if and only if ${\cal F}_{2,N}$ admits an equipartition of null mean curvature.
Here $\kappa$ is given by the expression
$$  \kappa=(\alpha_3 \;l_4-\alpha_4\; l_3)^{-1}\;,$$
where
\begin{equation}
  \left(
           \begin{array}{c}
             \alpha^\flat \\
             l^\flat \\
           \end{array}
         \right)=\left(
    \begin{array}{cc}
      \alpha_3 & \alpha_4 \\
      l_3 & l_4  \\
    \end{array}
  \right)\left(
           \begin{array}{c}
             dx^3 \\
             dx^4 \\
           \end{array}
         \right)\;.
         \label{chart2frame}
\end{equation}
Here $(x^1, \dots, x^4)$ is an adapted chart used in eq.(\ref{adaptedF}). Additionally, $u$ is any smooth function that is constant on the leaves of the foliation, i.e., $u = u(x^3,x^4)$.
\label{existunifianl}
\end{theorem}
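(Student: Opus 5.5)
We plan to reduce the force-free condition to a single scalar equation and then identify that equation with the equipartition condition. First we check closure. In the adapted chart, $\ker F$ must be $T({\cal F}_a)=\mathrm{span}\{s,l\}$. Since $\alpha^\flat$ and $l^\flat$ annihilate both $s$ and $l$ by (\ref{innerprod}), they lie in $\mathrm{span}\{dx^3,dx^4\}$. This justifies the matrix representation (\ref{chart2frame}), and it gives
\[
\alpha^\flat\wedge l^\flat=(\alpha_3 l_4-\alpha_4 l_3)\,dx^3\wedge dx^4=\kappa^{-1}\,dx^3\wedge dx^4 .
\]
Hence $F=u\kappa\,\alpha^\flat\wedge l^\flat=u(x^3,x^4)\,dx^3\wedge dx^4$. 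This is closed, and it is null because $F^2\propto g(\alpha,\alpha)g(l,l)-g(\alpha,l)^2=0$. Conversely, every closed 2-form with kernel $\mathrm{span}\{s,l\}$ has the form (\ref{adaptedF}), so the class is exactly $u\kappa\,\alpha^\flat\wedge l^\flat$ with $u$ constant on leaves. The question therefore becomes: when is $F(j^\sharp,\cdot)=0$?

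Next we compute $j$. Set $j=*d*F$, and note that $F(j^\sharp,\cdot)=0$ means $j^\sharp\in\ker F=\mathrm{span}\{s,l\}$. This is equivalent to $g(j^\sharp,\alpha)=0$ and $g(j^\sharp,l)=0$, i.e.\ $j^\sharp$ is orthogonal to $\alpha$ and $l$. In an orthonormal-null frame, $*(\alpha^\flat\wedge l^\flat)=\pm\, s^\flat\wedge l^\flat$, so $*F=\pm u\kappa\, s^\flat\wedge l^\flat$. Rather than working with Hodge stars, we will use the equivalent divergence form
\[
j^\nu=\nabla_\mu F^{\mu\nu},\qquad F^{\mu\nu}=\phi\,(\alpha^\mu l^\nu-l^\mu\alpha^\nu),\quad \phi=u\kappa .
\]
This gives
\[
j=\big(\alpha(\phi)+\phi\,\mathrm{div}\,\alpha\big)l+\phi\nabla_\alpha l-\big(l(\phi)+\phi\,\mathrm{div}\,l\big)\alpha-\phi\nabla_l\alpha .
\]
Contracting with $l$, and using $g(\nabla_\alpha l,l)=0$ together with $g(\nabla_l\alpha,l)=-g(\alpha,\nabla_l l)=0$ (geodesy, allowing affine reparametrization), gives $g(j,l)=0$ automatically. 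Contracting with $\alpha$ gives
\[
g(j,\alpha)=\phi\,g(\nabla_\alpha l,\alpha)-l(\phi)-\phi\,\mathrm{div}\,l .
\]
The divergence decomposes as $\mathrm{div}\,l=g(\nabla_s l,s)+g(\nabla_\alpha l,\alpha)-g(\nabla_n l,l)-g(\nabla_l l,n)$, and the last two terms vanish for an affinely parametrized geodesic field. So the force-free condition reduces to
\[
l(\ln\phi)=-g(\nabla_s l,s).
\]

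Finally we use the leaf structure. Since $u$ is constant along leaves and $l$ is tangent to leaves, $l(\ln\phi)=l(\ln\kappa)$. We must compute $l(\kappa^{-1})$, i.e.\ the Lie derivative of $\alpha^\flat\wedge l^\flat$ along $l$ evaluated on the coordinate coframe. We have $\mathcal L_l(\alpha^\flat\wedge l^\flat)=l(\kappa^{-1})\,dx^3\wedge dx^4$, because $\mathcal L_l dx^{3,4}=d(l(x^{3,4}))=0$. Evaluating on $(\alpha,n)$-type duals then expresses $l(\ln\kappa^{-1})$ through $g(\nabla_\alpha l,\alpha)$ and $g(\nabla_s l,s)$. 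The expected identity is
\[
l(\ln \kappa^{-1})=\mathrm{div}\,l-\text{(terms)}=g(\nabla_s l,s)+g(\nabla_\alpha l,\alpha)-\ldots,
\]
chosen so that the condition collapses to $g(\nabla_s l,s)=g(\nabla_\alpha l,\alpha)$.

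Getting this Lie-derivative computation right is the main obstacle. The Cartan formula and the involutivity $[s,l]\in\mathrm{span}\{s,l\}$ must be used to kill the cross terms. We would first attempt it via $d(\alpha^\flat\wedge l^\flat)=d(\kappa^{-1})\wedge dx^3\wedge dx^4$, contracted with $l,s,\alpha,n$. With that identity in hand, both directions of the equivalence follow: the condition holds for every $u$ exactly when equipartition holds.
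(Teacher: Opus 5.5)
The paper does not prove this theorem; it quotes it from \cite{Menon:2020hdk}. Your argument therefore has to stand on its own, and as written it has one gap. Everything you actually compute is correct:
\begin{itemize}
\item closure forces $u=u(x^3,x^4)$;
\item $g(j,l)=0$ holds automatically;
\item the force-free condition reduces to $l(\ln\phi)=-g(\nabla_s l,s)$, and hence, since $l(u)=0$, to $l(\ln\kappa)=-g(\nabla_s l,s)$ wherever $u\neq 0$.
\end{itemize}
The gap is that you stop exactly at the step that carries the content. Your ``expected identity'' for $l(\ln\kappa^{-1})$ is written down as whatever makes the condition collapse to equipartition, which is circular as it stands. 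Its anticipated shape is also slightly off. $g(\nabla_s l,s)$ never enters it, and there are no cross terms to kill. Involutivity is already used up in the fact you established, that $\mathcal L_l(\alpha^\flat\wedge l^\flat)=l(\ln\kappa^{-1})\,\alpha^\flat\wedge l^\flat$. Equivalently, $(\mathcal L_l\alpha^\flat)(s)=-g(\alpha,[l,s])=0$.

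Because of that proportionality, a single evaluation fixes the coefficient. Use
\begin{itemize}
\item $\nabla_l l=\lambda l$;
\item $(\mathcal L_l\alpha^\flat)(X)=g(\nabla_l\alpha,X)+g(\alpha,\nabla_X l)$;
\item $(\mathcal L_l l^\flat)(X)=g(\nabla_l l,X)$;
\item $(\alpha^\flat\wedge l^\flat)(\alpha,n)=-1$.
\end{itemize}
Then
\[
-\,l(\ln\kappa^{-1})=\mathcal L_l(\alpha^\flat\wedge l^\flat)(\alpha,n)=-(\mathcal L_l\alpha^\flat)(\alpha)+(\mathcal L_l l^\flat)(n)=-g(\nabla_\alpha l,\alpha)-\lambda ,
\]
so $l(\ln\kappa)=-g(\nabla_\alpha l,\alpha)-\lambda$.

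If you keep $\lambda$ in your divergence, you get $\mathrm{div}\,l=g(\nabla_s l,s)+g(\nabla_\alpha l,\alpha)+\lambda$. The force-free condition then reads $l(\ln\phi)=-g(\nabla_s l,s)-\lambda$. The $\lambda$'s cancel, so affine reparametrization is not even needed. What remains is exactly $g(\nabla_\alpha l,\alpha)=g(\nabla_s l,s)$, with no dependence on $u$.

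With this two-line computation inserted, both directions follow as you claim. The `only if' direction holds on the set where $u\neq 0$, since you divide by $\phi$.
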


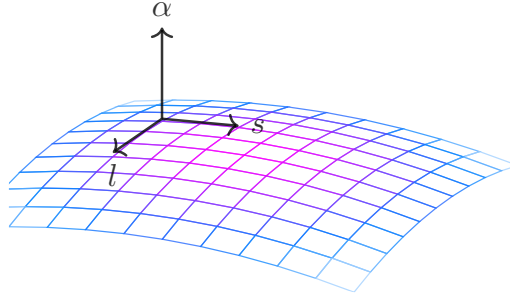
\begin{figure}
    \centering
\begin{tikzpicture}[scale=1.2]
\begin{axis}[xmin=-4, xmax=7,   ymin=-5,   ymax=7,zmin=-50,zmax=300 ,
   hide axis
    , colormap/cool,opacity=0.8];
\addplot3[
    mesh,samples=11,domain=-5:5,]{-((x)^2+(y)^2)};
\addlegendentry{$[l,s]\in \textrm{span}(l,s)$};


\draw [->,thick,] (-3,2,-10) -- (-3,2,100) node [above] {$\alpha$};

\draw [->,thick] (-3,2,-10) -- (-3,-1,-5) node [below] {$l$};
\draw [->,thick] (-3,2,-10) -- (-1,2,-3) node [right] {$s$};
\end{axis}
\end{tikzpicture}
    \caption{Null foliation generating a force-free field}
  
\end{figure}

\vskip0.2in
\begin{definition}
A null foliation ${\cal F}_{2,N}$ is a null field sheet foliation if it admits an equipartition of null mean curvature.
\end{definition}
\vskip0.2in
\begin{corollary}
 A null field sheet foliation will always permit a class of local null and force-free fields. Conversely, any null force-free solution will always have an associated null field sheet foliation.
\end{corollary}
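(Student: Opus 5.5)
The corollary is essentially a repackaging of Theorem \ref{existunifianl} together with the discussion preceding it, so I would prove its two directions separately, in each case reducing to that theorem.

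\emph{First direction.} Let ${\cal F}_{2,N}$ be a null field sheet foliation. By definition it is a null foliation that admits an equipartition of null mean curvature.
\begin{enumerate}
\item Choose a foliation-adapted frame $(s,l,\alpha,n)$ satisfying (\ref{innerprod}). Locally this is always possible: take $l$ and $s$ spanning the leaves, then complete the frame by Gram--Schmidt with respect to the degenerate plane $\textrm{span}(l,s)$.
\item Take a local adapted chart $(x^1,\dots,x^4)$ whose leaves are $x^3,x^4=\,$const, as in \cite{Menon:2020npo}.
\item Theorem \ref{existunifianl} now applies. It yields the family $F=u\,\kappa\,\alpha^\flat\wedge l^\flat$, with $u=u(x^3,x^4)$ arbitrary, and each member is null and force-free.
\end{enumerate}
The one point needing care is that $\kappa$ is well defined, i.e.\ that $\alpha_3 l_4-\alpha_4 l_3\neq 0$. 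Both $\alpha^\flat$ and $l^\flat$ annihilate $T({\cal F}_a)=\textrm{span}(l,s)$. For $l^\flat$ this uses $g(l,l)=g(l,s)=0$; for $\alpha^\flat$ it uses $g(\alpha,l)=g(\alpha,s)=0$. Hence both lie in the span of $dx^3$ and $dx^4$, which justifies the form (\ref{chart2frame}). They are also linearly independent, since $g(\alpha,\alpha)=1$ while $l^\flat(\alpha)=0$ and $l^\flat(n)=-1\neq\alpha^\flat(n)=0$. So the matrix in (\ref{chart2frame}) is invertible. The word ``local'' in the corollary reflects that the adapted chart, and hence $F$, exists only locally.

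\emph{Converse.} Let $F$ be a null force-free field. The closedness argument given earlier shows that $\ker F$ is involutive, so Frobenius' theorem gives a foliation by field sheets. By \cite{Menon:2020hdk}, $T({\cal F}_a)$ is spanned by a null geodesic $l$ and an orthogonal spacelike $s$, so this is a null foliation ${\cal F}_{2,N}$. Completing to an adapted frame, the adapted-chart form (\ref{adaptedF}) shows that $F$ is a function of $(x^3,x^4)$ times $dx^3\wedge dx^4$. Inverting (\ref{chart2frame}) gives
$$dx^3\wedge dx^4=\kappa\,\alpha^\flat\wedge l^\flat,$$
so $F=u\,\kappa\,\alpha^\flat\wedge l^\flat$ with $u=f(x^3,x^4)$ constant on leaves. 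The ``only if'' direction of Theorem \ref{existunifianl} then gives equipartition of null mean curvature, so the foliation is a null field sheet foliation.

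The main obstacle I expect is in the converse: we must apply the theorem to a given $F$ rather than merely to some member of the family it produces. The computation above settles this by showing that every null force-free $F$ already has the form $u\,\kappa\,\alpha^\flat\wedge l^\flat$. A secondary check is that the equipartition condition in Definition \ref{equipartdef} does not depend on the choice of $n$ or of the adapted frame up to sign changes. This is acceptable because the condition only involves $s$, $\alpha$ and $l$, which are fixed up to rescaling of $l$, and such a rescaling multiplies both sides of the equipartition equation equally.
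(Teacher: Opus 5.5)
Your proposal is correct and follows the paper, which gives no separate proof of this corollary: it takes the result directly from Theorem~\ref{existunifianl} together with the earlier observation that $\ker F$ is involutive and tangent to a null foliation. Your checks that $\kappa$ is well defined and that every null force-free $F$ already has the form $u\,\kappa\,\alpha^\flat\wedge l^\flat$ make explicit what the paper leaves implicit. One small correction to your closing remark: for a fixed foliation, $s$ and $\alpha$ are determined only up to $s\to\pm s+c\,l$ and $\alpha\to\pm\alpha+d\,l$, not merely up to rescaling of $l$. The equipartition condition is still frame-independent, because $g(\nabla_X l,l)=0$ for all $X$ and $\nabla_l l\propto l$ is orthogonal to both $s$ and $\alpha$.
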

\vskip0.2in
\begin{definition}
Let ${\cal F}_{2,N}$ be a null field sheet foliation of ${\cal M}$, and let $(s, l, \alpha, n)$ be an associated foliation-adapted frame. Then  $l$ admits a uniform equipartition of null mean curvature if
$$g(\nabla_s l, \alpha)+g(\nabla_\alpha l, s) =0\;. $$
\end{definition}

The requirement of uniform equipartition of null mean curvature allows us to widen the class of a null field foliation.

\begin{theorem}
Let ${\cal F}_{2,N}$ be a null field sheet foliation of ${\cal M}$, and let $(s, l, \alpha, n)$ be a null foliation adapted frame for ${\cal F}_{2,N}$. Let $l$  admit a uniform equipartition of null mean curvature, and for some smooth functions $A$ and $B$, let
$$ \hat s = A\; s+ B\;\alpha$$ be a unit vector field such that the span of $l$ and $\hat {s} $ forms an integrable distribution $\hat {\cal F}_{2,N}$ of ${\cal M}$. Then  $\hat {\cal F}_{2,N}$ is a null field sheet foliation.
\label{hatsoltheo}
\end{theorem}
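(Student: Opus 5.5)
The plan is to check directly the equipartition condition of Definition \ref{equipartdef} for the new frame, and then let Theorem \ref{existunifianl} supply the force-free fields. Since $\hat s$ is a unit vector field in the span of $s$ and $\alpha$, both of which are orthogonal to $l$ and to each other, we have $A^2+B^2=1$. We therefore complete the frame with
$$\hat\alpha = -B\,s + A\,\alpha,$$
keeping $l$ and $n$ unchanged. The inner-product relations (\ref{innerprod}) are then preserved: $\hat s,\hat\alpha$ are orthonormal, orthogonal to $l$ and $n$, and $g(n,l)=-1$. By hypothesis $\mathrm{span}(l,\hat s)$ is integrable, so Frobenius gives the foliation $\hat{\cal F}_{2,N}$, whose leaves are tangent to the null geodesic field $l$ and the orthogonal spacelike field $\hat s$. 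Hence $\hat{\cal F}_{2,N}$ is a null foliation with foliation-adapted frame $(\hat s, l, \hat\alpha, n)$.

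It remains to show $g(\nabla_{\hat s} l,\hat s)=g(\nabla_{\hat\alpha} l,\hat\alpha)$. For brevity write $a=g(\nabla_s l,s)$, $b=g(\nabla_\alpha l,\alpha)$, and $c_1=g(\nabla_s l,\alpha)$, $c_2=g(\nabla_\alpha l,s)$. Covariant derivatives are tensorial in the direction slot, so expanding bilinearly gives
$$g(\nabla_{\hat s} l,\hat s)=A^2 a + B^2 b + AB(c_1+c_2),$$
$$g(\nabla_{\hat\alpha} l,\hat\alpha)=B^2 a + A^2 b - AB(c_1+c_2).$$
Because ${\cal F}_{2,N}$ is a null field sheet foliation, $a=b$. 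Uniform equipartition gives $c_1+c_2=0$. Substituting, both expressions reduce to $(A^2+B^2)a=a$, so they are equal. Thus $\hat{\cal F}_{2,N}$ admits an equipartition of null mean curvature and is, by definition, a null field sheet foliation. Applying Theorem \ref{existunifianl} then also yields the associated class of null force-free fields $u\,\hat\kappa\,\hat\alpha^\flat\wedge l^\flat$.

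The one point needing care is that the components $A$ and $B$ are functions, not constants. This is harmless: only the direction slot of $\nabla$ is rescaled, and that slot is $C^\infty$-linear, while $l$ itself is untouched, so no derivatives of $A$ or $B$ appear. I also expect to remark that the two hypotheses together say the symmetric trace-free part of $g(\nabla_X l,Y)$ on $\mathrm{span}(s,\alpha)$ vanishes. That condition is rotation-invariant, which explains why the result holds for every rotation and not just a particular one.
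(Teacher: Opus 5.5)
Your proof is correct and is essentially the paper's argument. The paper quotes this theorem from earlier work without re-proving it, but your expansion of $g(\nabla_{\hat s} l,\hat s)$ and $g(\nabla_{\hat\alpha} l,\hat\alpha)$ is exactly the rotation identity in its proof of Theorem~\ref{equigen} (with $A=\cos\Theta$, $B=\sin\Theta$), and under $a=b$, $c_1+c_2=0$ both sides reduce to $a$ as you show; your remark that the two hypotheses amount to vanishing shear is the same reading the paper later draws from eq.~(\ref{sigmaexp}).
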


\section{Further Results}
There are two major obstacles to a null geodesic congruence from being associated with a null and force-free electromagnetic field. First, the triplet $(l,s,\alpha)$ may not satisfy the equipartition condition in definition \ref{equipartdef}. The second is that once we find a triplet that satisfies the equipartition condition, the resulting pair $(l,s)$ may not form an involutive distribution.
In this section, for a given null geodesic congruence, we will look for ways to search for a particular set $(l,s)$ that is free of the problems listed above. The next, rather simple result, will show that given a null geodesic congruence, there will always be a triplet such that the equipartition condition is satisfied.
\subsection*{The Equipartition Condition}
\begin{theorem}
    Let $l$ be a null geodesic congruence. There are always local orthonormal vector fields $s$ and $\alpha$ both orthogonal to $l$ such that 
    $$g(\nabla_s\; l, \;s) - g(\nabla_\alpha\; l, \;\alpha)=0\;.$$
    \label{equigen}
\end{theorem}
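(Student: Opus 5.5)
Start from any local orthonormal pair $(s_0,\alpha_0)$ of spacelike vector fields orthogonal to $l$, completed with a null $n$ as in eq.~(\ref{innerprod}). Such a pair exists locally: first pick any null $n$ with $g(n,l)=-1$ in a neighbourhood, then take an orthonormal frame of the $2$-plane orthogonal to both $l$ and $n$. Every other admissible pair is obtained, up to adding multiples of $l$, by a rotation through an angle $\phi$ given by a smooth function on ${\cal M}$:
$$s=\cos\phi\, s_0+\sin\phi\,\alpha_0,\qquad \alpha=-\sin\phi\, s_0+\cos\phi\,\alpha_0 .$$
Adding multiples of $l$ is harmless. Since $l$ is null and geodesic, $g(\nabla_l l,\cdot)=0$ up to a multiple of $l$ and $g(\nabla_X l,l)=\tfrac12 X g(l,l)=0$, so shifting $s$ or $\alpha$ along $l$ does not change $g(\nabla_s l,s)$ or $g(\nabla_\alpha l,\alpha)$. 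The plan is to choose $\phi$ pointwise, as a smooth function, so that the difference in the statement vanishes.

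Define the symmetric bilinear form $B(X,Y)=\tfrac12\big[g(\nabla_X l,Y)+g(\nabla_Y l,X)\big]$ on the screen spanned by $(s_0,\alpha_0)$, with components $a=B(s_0,s_0)$, $c=B(\alpha_0,\alpha_0)$ and $b=B(s_0,\alpha_0)$. Only the symmetric part enters $g(\nabla_s l,s)$, so expanding gives
$$g(\nabla_s l,s)-g(\nabla_\alpha l,\alpha)=(a-c)\cos 2\phi+2b\sin 2\phi .$$
This is the trace-free part of $B$, i.e. the shear, rotated by $2\phi$. We need a smooth $\phi$ with $(a-c)\cos2\phi+2b\sin2\phi=0$.

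Where the shear vanishes, any $\phi$ works, for instance $\phi=0$. Near a point where the shear is nonzero, write $(a-c,2b)=\sigma(\cos\psi,\sin\psi)$ with $\sigma>0$. Then $\psi$ is a smooth local function, for example $\psi=\arctan\big(2b/(a-c)\big)$ or the arccot form, depending on which component is nonzero. Setting $2\phi=\psi+\pi/2$ gives zero.

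Making the construction globally smooth is unnecessary, since the theorem is local. The only real obstacle is a point where the shear vanishes but is not identically zero nearby, since $\psi$ may fail to extend there. To handle this, I would avoid the angle and solve the equation directly. The quantity $E(\phi)=(a-c)\cos2\phi+2b\sin2\phi$ changes sign between $2\phi=\psi_0$ and $2\phi=\psi_0+\pi$. More simply, $E(\phi+\pi/2)=-E(\phi)$, so by continuity a zero always exists at each point. A smooth selection, however, is easiest on the open dense set where $\sigma\neq0$ or on the interior of $\{\sigma=0\}$. I would state the result in that local sense, consistent with ``local'' in the theorem. Finally, I would check that the rotated pair is still orthonormal and orthogonal to $l$, which is immediate.
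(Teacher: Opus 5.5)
Your argument is essentially the paper's proof: rotate an arbitrary orthonormal screen pair by an angle function and solve $(a-c)\cos2\phi+2b\sin2\phi=0$ for $2\phi$. The paper writes this as $\tan 2\Theta=\cdots$, with a sign slip, and takes $\Theta=\pi/4$ when $g(\nabla_s l,\alpha)+g(\nabla_\alpha l,s)=0$. Your caveat about points where the shear vanishes without vanishing identically nearby is well founded, and the paper glosses over it (its formula becomes $0/0$ there): around a zero of the complex shear with odd winding number the equipartition directions form a non-orientable line field, so no continuous $s$ exists on any neighbourhood of such a point, and your open-dense formulation is the version of ``local'' that can actually be proved.
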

{\bf Proof}. Before we begin the proof, it is important to note that we are not claiming the distribution spanned by $l$ and $s$ is involutive. 

Given $l$, pick any vector fields $s$ and $\alpha$ such that

$$g(l,l)=g(l,s)=g(l,\alpha) =g(s,\alpha)=0\;\;\;{\rm and}\;\;\;g(s,s)=g(\alpha, \alpha)=1\;.$$

Now, suppose the triplet does not satisfy the equipartition condition. Consider a rotation of type

\begin{equation}
  \left(
           \begin{array}{c}
             \hat s \\
             \hat \alpha \\
           \end{array}
         \right)=\left(
    \begin{array}{cc}
      \cos\Theta & \sin\Theta \\
      -\sin\Theta & \cos\Theta \\
    \end{array}
  \right)\left(
           \begin{array}{c}
             s \\
             \alpha \\
           \end{array}
         \right)\;,
\end{equation}

where $\Theta$ is a function on the manifold.
Then,

$$g(\nabla_{\hat s} l,\hat s)=\cos^2\Theta\; g(\nabla_s l,s)+\sin^2\Theta\; g(\nabla_\alpha l, \alpha)+\cos\Theta \sin\Theta \;\Big(g(\nabla_s l, \alpha)+g(\nabla_\alpha l,s)\Big)\;.$$
Similarly
$$g(\nabla_{\hat \alpha} l,\hat \alpha)=\cos^2\Theta\; g(\nabla_\alpha l,\alpha)+\sin^2\Theta\; g(\nabla_s l, s)-\cos\Theta \sin\Theta \;\Big(g(\nabla_s l, \alpha)+g(\nabla_\alpha l,s)\Big)\;.$$

Then
$g(\nabla_{\hat s}\; l, \;\hat s) = g(\nabla_{\hat \alpha}\; l, \;\hat \alpha)$
if and only if
\begin{equation}
  \tan(2\Theta) = \frac{g(\nabla_s\; l, \;s) - g(\nabla_\alpha\; l, \;\alpha)}{g(\nabla_s l, \alpha)+g(\nabla_{\alpha} l, s)}\;. 
  \label{rotangle}
\end{equation}

Please note that if $g(\nabla_s l, \alpha)+g(\nabla_{\alpha} l, s) =0$, a rotation by $\Theta = \pi/4$ will suffice. Clearly $\hat s$ and $\hat \alpha$ satisfy the requirements of the theorem.

\hfill $\blacksquare$
\subsection*{Example 1}
Consider,  the Schwarzschild line element written in the ingoing Eddington-Finkelstein Coordinates as follows \footnote{Throughout the paper we will be using the $(-,+,+,+)$ signature.}
 \[
ds^2 = \frac{-r + 2M}{r}\,dt^2 + 2\,dt\,dr + r^2\,d\theta^2 + r^2 \sin^2\theta\,d\varphi^2.
\]
Now, consider the following null geodesic congruence.\\

$$l=\frac{(c_2r^2 + \sqrt{c_2^2 r^3 + 2Mc_1^2 - c_1^2r}\sqrt r)}{r(-r + 2M)}\;\partial_{ t} -\frac{\sqrt {c_2^2r^3 + 2Mc_1^2 - c_1^2r}}{r^{3/2}}\;\partial_r +\frac{c_1}{r^2}\;\partial_\theta\;.$$
Let
$$s = \frac{1}{r\sin\theta}\partial_\varphi\;,$$
and
$$\alpha =\frac{c_1 (r-2M)}{ c_2r^2+\sqrt r \sqrt{c_2^2 r^3 + 2Mc_1^2 - c_1^2r}} \;\partial_r +\frac{1}{r}\ \partial_\theta\;.$$

It is easily verified that $s$ and $\alpha$ as given above satisfy the conditions of the theorem above.
Then

$$g(\nabla_sl ,s)-g(\nabla_\alpha l,\alpha) \neq 0\;.$$

A routine calculation will confirm that
$$g(\nabla_s l ,\alpha)+g(\nabla_\alpha l,s)=0\;.$$
Therefore, set $\Theta = \pi/4$, and so

$$\hat s = (s + \alpha)/\sqrt 2 \;\;\;{\rm and}\;\;\;
\hat \alpha = (s - \alpha)/\sqrt 2\;.$$
It is easily verified that now

$$g(\nabla_{\hat s} l ,\hat s)-g(\nabla_{\hat \alpha} l,\hat \alpha)=0\;.$$
\vskip0.2in \noindent 
Unfortunately, neither the pair $(l, \hat s)$ nor the pair$(l, \hat \alpha)$ is involutive unless $c_1 =0$, and this leads to the Schwarzschild limit of the well-known null solution in Kerr spacetime (see \cite{Menon:2005mg} and \cite{Brennan:2013kea} for its generalizations). 

\subsection*{Example 2}
In Boyer--Lindquist coordinates $(t,r,\theta,\varphi)$, the Kerr line element is
\[
\begin{aligned}
ds^2 ={} & -\left(1 - \frac{2Mr}{\rho^2}\right) dt^2
- \frac{4Mra\sin^2\theta}{\rho^2}\,dt\,d\varphi
+ \frac{\rho^2}{\Delta}\,dr^2
+ \rho^2\,d\theta^2   + \frac{\Sigma^2\sin^2\theta}{\rho^2}\,d\varphi^2 ,
\end{aligned}
\]
where
\[
\rho^2 = r^2 + a^2\cos^2\theta, \qquad
\Delta = r^2 - 2Mr + a^2, \qquad
\Sigma^2 = (r^2 + a^2)^2 - a^2\Delta \sin^2\theta .
\]

As a second non-trivial example, consider the following null geodesic congruence in this spacetime.
$$l=-\frac{f(\theta)}{\rho^2}\left(\frac{(\rho^2 (r^2+a^2)+4Ma^2 r\sin^2\theta)}{\Delta }\;\partial_t+\sqrt{(\rho^2)^2 +8Ma^2r\sin^2\theta} \;\partial_r-a\frac{(\rho^2-4Mr)}{\Delta }\;\partial_\varphi\right)$$
Let
$$s= \frac{1}{\sqrt{\rho^2}} \partial_\theta\;,$$
and
$$\alpha =\frac{1}{(\rho^2)^{3/2}}\Bigg(\frac{2aMr\sqrt{(\rho^2)^2+4Ma^2 r\sin^2\theta}}{\Delta}\;\partial_t-a\sin\theta (\rho^2-4Mr)\;\partial_r $$
$$- \frac{(\rho^2-2Mr)}{\Delta \sin\theta}\sqrt{(\rho^2)^2+4Ma^2 r\sin^2\theta}\;\partial_\varphi\Bigg)\;.$$
Once again, $s$ and $\alpha$ satisfy the required ortho-normality conditions. Here both 

$$g(\nabla_sl ,s)-g(\nabla_\alpha l,\alpha) \;\;\;{\rm and} \;\;\;g(\nabla_s l ,\alpha)+g(\nabla_\alpha l,s)$$ 

are non-zero. The rotation angle, as given by eq.(\ref{rotangle}) reduces to

$$\tan2\Theta=(a^2 \cos^2\theta - 3r^2)/(4ar\cos\theta)\;.$$

Lengthy but routine calculations confirm that the rotated vectors $\hat s$ and $\hat \alpha$ indeed satisfy

$$g(\nabla_{\hat s} l ,\hat s)-g(\nabla_{\hat \alpha} l,\hat \alpha)=0\;.$$

Sadly, neither $l,\hat s$ nor $l,\hat \alpha$ form an involutive pair and so they do not generate force-free solutions.

\subsection*{The Involutivity Condition}

Unlike the previous case of equipartition, the guaranteed existence for involutivity between $l$ and $\hat s$ requires the uniform equipartition of null mean curvature.

\begin{theorem}
Let $l$ be a null geodesic congruence, and let $(s, l, \alpha, n)$ be an associated tetrad with all inner products as listed in eq.(\ref{innerprod}).
Further, let 
$$g(\nabla_s\; l, \;s) = g(\nabla_\alpha\; l, \;\alpha)\;,\;\;\;{\rm and}\;\;\;g(\nabla_s l, \alpha)+g(\nabla_\alpha l, s) =0\;. $$

Then there exists an arbitrary function of three variables such that any choice of such a function will generate a null field sheet foliation. Additionally, each unique foliation will be associated with a null force-free field that further contains an arbitrary function of two variables.
\label{involthem}
\end{theorem}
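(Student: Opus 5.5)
We will look for the desired foliation among the rotated frames $\hat s=\cos\Theta\, s+\sin\Theta\,\alpha$, $\hat\alpha=-\sin\Theta\, s+\cos\Theta\,\alpha$, where $\Theta$ is an unknown function on ${\cal M}$. The hypotheses say that the transverse part of $\nabla l$ has vanishing symmetric trace-free part with respect to $(s,\alpha)$. By the two rotation formulas in the proof of Theorem \ref{equigen}, any rotation then preserves both
$$g(\nabla_{\hat s}l,\hat s)=g(\nabla_{\hat\alpha}l,\hat\alpha) \quad\mbox{and}\quad g(\nabla_{\hat s}l,\hat\alpha)+g(\nabla_{\hat\alpha}l,\hat s)=0 .$$
So equipartition holds automatically for every $\Theta$, and the only remaining requirement is that the distribution spanned by $l$ and $\hat s$ be involutive. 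Once that holds, Theorem \ref{existunifianl} (equivalently Theorem \ref{hatsoltheo}, with $A=\cos\Theta$, $B=\sin\Theta$) produces the force-free field $F=u\,\kappa\,\hat\alpha^\flat\wedge l^\flat$. Here $u=u(x^3,x^4)$ is an arbitrary function of two variables on the leaf space, which gives the second part of the claim.

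For involutivity we need $[l,\hat s]\in{\rm span}(l,\hat s)$. Using the tetrad, this holds if and only if $g([l,\hat s],\hat\alpha)=0$ and $g([l,\hat s],l)=0$.

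\textbf{The $l$-component.} We have $g([l,\hat s],l)=g(\nabla_l\hat s,l)-g(\nabla_{\hat s}l,l)=-g(\hat s,\nabla_l l)-\tfrac12\hat s(g(l,l))=0$, since $l$ is null and geodesic. This assumes affine parametrization; otherwise $\nabla_l l\propto l$ and the term still vanishes because $g(\hat s,l)=0$. So this component is automatic.

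\textbf{The $\hat\alpha$-component.} Expanding,
$$[l,\hat s]=\cos\Theta[l,s]+\sin\Theta[l,\alpha]+l(\Theta)\,\hat\alpha ,$$
so
$$g([l,\hat s],\hat\alpha)=l(\Theta)+\cos\Theta\, g([l,s],\hat\alpha)+\sin\Theta\, g([l,\alpha],\hat\alpha).$$
Write $g([l,s],\hat\alpha)=-\sin\Theta\,g([l,s],s)+\cos\Theta\, g([l,s],\alpha)$ and similarly for $[l,\alpha]$. The condition then becomes a first-order ODE along the integral curves of $l$:
$$l(\Theta)=P\cos^2\Theta+Q\sin\Theta\cos\Theta+R\sin^2\Theta,$$
with coefficients built from $g([l,s],s)$, $g([l,s],\alpha)$, $g([l,\alpha],s)$, $g([l,\alpha],\alpha)$.

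The key step, and the one I expect to be the main obstacle, is to show that the uniform equipartition hypotheses kill the $\Theta$-dependence, so the equation collapses to $l(\Theta)=g([\alpha,l],s)$ or an equivalent $\Theta$-independent expression. The idea is to write $g([l,s],s)=g(\nabla_l s,s)-g(\nabla_s l,s)=-g(\nabla_s l,s)$ and $g([l,\alpha],\alpha)=-g(\nabla_\alpha l,\alpha)$, so equipartition cancels the $\cos^2\Theta-\sin^2\Theta$ contributions. The cross terms $g([l,s],\alpha)$ and $g([l,\alpha],s)$ involve $g(\nabla_l s,\alpha)=-g(\nabla_l\alpha,s)$, together with $g(\nabla_s l,\alpha)$ and $g(\nabla_\alpha l,s)$. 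The vanishing of the symmetric part should leave only the rotation term $g(\nabla_l s,\alpha)$ plus the antisymmetric twist part, multiplied by $\cos^2\Theta+\sin^2\Theta=1$. I will carry out this bookkeeping carefully.

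Given the resulting linear transport equation $l(\Theta)=h$, with $h$ a known function, I choose adapted coordinates $(\lambda,y^1,y^2,y^3)$ with $l=\partial_\lambda$. Then
$$\Theta=\int h\,d\lambda+G(y^1,y^2,y^3),$$
where $G$ is an arbitrary function of three variables. Each choice of $G$ makes $(l,\hat s)$ involutive, so by Frobenius it gives a null field sheet foliation, and by the first paragraph each such foliation carries the family $F=u\,\kappa\,\hat\alpha^\flat\wedge l^\flat$.

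If the $\Theta$-dependence does not fully cancel, the fallback is a Riccati-type equation in $\tan\Theta$. This still has local solutions parametrized by their initial data on a hypersurface transverse to $l$, which is again an arbitrary function of three variables, so the conclusion survives.
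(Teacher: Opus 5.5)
Your proposal is correct and follows essentially the same route as the paper. You rotate $(s,\alpha)$ by $\Theta$, observe that the $l$-component of $[l,\hat s]$ vanishes automatically, use equipartition and uniform equipartition (via $g([l,\alpha],s)=-g([l,s],\alpha)$) to reduce involutivity to a linear transport equation along $l$, integrate it with an arbitrary function of three variables as the integration constant, and then invoke Theorems \ref{existunifianl} and \ref{hatsoltheo} for the two-variable freedom in $u$.

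Two small fixes to your anticipated bookkeeping:
\begin{itemize}
\item The equation actually comes out as $l(\Theta)=-g([l,s],\alpha)=-g([\alpha,l],s)$; your guess has the opposite sign.
\item It is equipartition that kills the $\sin\Theta\cos\Theta$ (that is, $\sin 2\Theta$) term, while uniform equipartition kills the $\cos 2\Theta$ term. So the Riccati fallback is never needed.
\end{itemize}
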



{\bf Proof}. As per the conditions of the theorem, we assume that $l,s, \alpha$ satisfy the uniform equipartition of null mean curvature.
Using theorem \ref{hatsoltheo}, all that remains is to locate the involutive surfaces described by $A =\cos\Theta$ and $B =\sin\Theta$. As a first step, note that

$$g([l, \hat s],l)= g(\nabla_l \hat s, l)-g(\nabla_{\hat s}l , l)= -g(\hat s, \nabla_l l)-\hat s \;g(l,l)=0\;.$$

Therefore, all that remains to be shown for $(l,\hat s)$ to be an involutive pair is that
$g([l, \hat s],\hat \alpha) =0$. This is, in fact, a non-trivial requirement.

\begin{align}
    g([l, \hat s],\hat \alpha) =& -\sin\Theta\; l(\cos\Theta)-\sin\Theta \cos\Theta \;g([l,s],s)-\sin^2\Theta\; g([l, \alpha],s) +\cos^2\Theta \;g([l,s],\alpha)
  \nonumber\\
&
  +\cos\Theta\; l(\sin\Theta) +\sin\Theta \cos\Theta \;g([l,\alpha],\alpha)\;.
  \label{invol}
\end{align}

From the equipartition condition, we get that

$$g([l,\alpha],\alpha) = - g(\nabla_\alpha l, \alpha) = - g(\nabla_s l, s) = g([l,s],s)
\;.$$

Similarly, due to the uniform equipartition condition, we get that

$$g([l,s],\alpha) = g(\nabla_l s, \alpha)-g(\nabla_sl,\alpha) = -g(\nabla_l \alpha, s)+g(\nabla_\alpha l,s) = -g([l,\alpha],s)\;.$$

Putting the results of the last two equations into eq.(\ref{invol}), and setting $g([l, \hat s],\hat \alpha)=0$, we get that

$$\sin\Theta\; l(\cos\Theta) - \cos\Theta\; l(\sin\Theta) = g([l,s],\alpha)\;.$$
This gives the transport equation
\begin{equation}
    l(\Theta) = -\Gamma_{l}, 
    \qquad
    \Gamma_{l} := g([l, s], \alpha) .
\end{equation}

Here $\Gamma_{l}$ can be thought of as the precession rate of the transverse frame along  $l$.
The above equation is easily integrated by going to a local chart $\{x^\mu\}$ where $l = \partial/\partial x^1$ to obtain
\begin{equation}
 \Theta = -\int g([l,  s], \alpha)\; dx^1 + \tilde \Theta (x^2, x^3, x^4)\;.   
 \label{rotchoice}
\end{equation}
Here, $\tilde \Theta (x^2, x^3, x^4)$ is a $3$-parameter valued integration constant. Clearly, for each choice of an involutive foliation, theorems \ref{existunifianl} and \ref{hatsoltheo} give us a two-parameter valued null and force-free field.

\hfill $\blacksquare$

\subsection*{Example 3}
Here, we consider a sample null FFE solution in flat spacetime. 
In flat spacetime with Minkowski coordinates $(t,x,y,z)$, consider a null geodesic congruence generated by
$$l=\partial_t + \partial_z\;.$$
We choose two spacelike vector fields orthogonal to $l$ to be 
$$s = \partial_y,\;\;\;{\rm and}\;\;\;\alpha = \partial_x\;.$$
Clearly, the foliation generated by the distribution consisting of the span of $l$ and $s$ satisfies the uniform equipartition condition, and eq.(\ref{rotchoice}) reduces to
$$\Theta = \Theta(x,y,z-t)\;.$$

Hence, for any choice of $\Theta$, $l$ and $\hat s$ forms an involutive pair, where 
$$\hat s =
\cos\Theta\;\partial_x
+
\sin\Theta\;\partial_y.$$
Also, since
$$\hat \alpha = -\sin\Theta\,\partial_x +\cos\Theta \;\partial_y\;,$$

we can expect a null force-free solution of the form
$$F=u \cdot \kappa\; l^\flat \wedge \hat \alpha^\flat= u \cdot \kappa\;(-dt + dz) \wedge (-\sin\Theta \;dx+ \cos\Theta\; dy)\;.$$

For a general form of $\Theta(x,y,z-t)$, it is not easy to find an expression for $\kappa$; however, it is not difficult to now enforce the force-free condition on the simplified expression for $F$ given by

$$F = (-dt + dz) \wedge (A\;dx+ B dy)\;.$$
Then for an arbitrary function $B = B(x,y,z-t)$ of $3$ variables, we obtain the following expression for $A$ which has an additional $2$-parameter degree of freedom (as expected):

$$A = \int \partial_x B(x,y,z-t)\,dy + \tilde A(x,z-t)\;.$$

For a fixed $B$, the freedom in the choice of $A$ is contained in $\tilde A$.

\subsection*{The Role of the Shear Tensor in Null FFE}
For a null geodesic congruence $l$, the transverse space has the metric $h = s \otimes s + \alpha \otimes \alpha$. As usual, let
$$B_{\mu\nu} = h_\mu ^\lambda h_\nu ^\sigma \nabla_\lambda l_\sigma\;. $$ Then
$$B_{\mu\nu} = \frac{1}{2}\theta h_{\mu\nu} + s_\mu \alpha_\nu \;g(\nabla_\alpha l,s) +s_\nu \alpha_\mu \;g(\nabla_s l,\alpha)\;.$$
The antisymmetric part of $B_{\mu\nu}$ denoted by $\omega_{\mu\nu}$ is the twist of the congruence, and is given by
$$2 \omega = \Big(g(\nabla_\alpha l , s) - g(\nabla_s l , \alpha)\Big) (s \otimes \alpha - \alpha \otimes s)\;.$$
The shear tensor $\sigma$ is then given by the trace-free symmetric part defined by
$$\sigma_{\mu\nu} = B_{(\mu\nu)}- \frac{1}{2}\theta h_{\mu\nu}\;.$$
Here $()$ denotes symmetrization of the included indices. In terms of $s$ and $\alpha$ the shear tensor $\sigma$ corresponding to the congruence $l$ is now easily seen to be given by

\begin{equation}
2 \sigma = \Big(g(\nabla_s l , s) - g(\nabla_\alpha l , \alpha)\Big) (s \otimes s - \alpha \otimes \alpha)+\Big(g(\nabla_\alpha l , s) + g(\nabla_s l , \alpha)\Big)  (s \otimes \alpha + \alpha \otimes s)\;.
\label{sigmaexp}
\end{equation}

At last, we can give a geometric meaning to the conditions we have been using thus far.

\begin{corollary}
Let $(s, l, \alpha, n)$ be a foliation-adapted frame for a null foliation ${\cal F}_{2,N}$. Then the equipartition of null mean curvature is equivalent to the condition that the shear tensor $\sigma$ for the null congruence $l$ of ${\cal F}_{2,N}$ vanishes when it is pulled back to the leaves of the foliation in the sense that if $v_1, v_2 \in T( ({\cal F}_{2,N})_a)$, then $\sigma(v_1, v_2) =0$.
\end{corollary}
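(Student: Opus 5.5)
The plan is to evaluate the shear tensor $\sigma$ on the tangent space of a leaf, using the explicit frame expression eq.(\ref{sigmaexp}). The tangent space of each leaf $({\cal F}_{2,N})_a$ is spanned by $l$ and $s$. Since $\sigma$ is a bilinear form, it suffices to compute $\sigma$ on the pairs $(l,l)$, $(l,s)$, $(s,l)$ and $(s,s)$; vanishing on all of $T(({\cal F}_{2,N})_a)$ is then equivalent to vanishing on these four pairs.

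First, I will dispose of every pair that involves $l$. In eq.(\ref{sigmaexp}), $\sigma$ is built entirely from the tensors $s\otimes s$, $\alpha\otimes\alpha$, $s\otimes\alpha$ and $\alpha\otimes s$, where $s$ and $\alpha$ act on vectors through the metric. By eq.(\ref{innerprod}) we have $g(s,l)=g(\alpha,l)=0$, so every such term is zero as soon as one slot contains $l$. Hence
$$\sigma(l,\cdot)=\sigma(\cdot,l)=0$$
identically, regardless of any equipartition assumption. This reflects the fact that $\sigma$ is purely transverse, being projected with $h$.

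Second, I will compute the remaining component $\sigma(s,s)$. Using $g(s,s)=1$ and $g(s,\alpha)=0$, the second bracket in eq.(\ref{sigmaexp}) contributes nothing, because each of its terms carries at least one factor $g(\alpha,s)=0$. In the first bracket, $(s\otimes s-\alpha\otimes\alpha)(s,s)=1$. This gives
$$2\sigma(s,s)=g(\nabla_s l,s)-g(\nabla_\alpha l,\alpha).$$
Therefore $\sigma$ vanishes on the leaves if and only if $g(\nabla_s l,s)=g(\nabla_\alpha l,\alpha)$. By Definition \ref{equipartdef}, this is exactly the equipartition of null mean curvature, which proves both directions at once.

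The only step I expect to need care is confirming that the pull-back is well defined independently of the frame. This amounts to checking that $\sigma$ annihilates $l$, which was done in the first step, so the $l$-component of any tangent vector drops out. With that settled, the argument reduces to the single component $\sigma(s,s)$ on the one-dimensional transverse direction of the leaf, and the equivalence follows.
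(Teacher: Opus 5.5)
Your proof is correct. It is also more complete than the proof printed in the paper, which consists only of the sentence that the corollary ``follows immediately from theorem \ref{existunifianl}.''

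That theorem characterizes when a null force-free field exists and says nothing by itself about the shear tensor. The actual content of the corollary lies in eq.(\ref{sigmaexp}) together with Definition \ref{equipartdef}, and that is exactly what you unpack:
\begin{itemize}
\item $\sigma$ annihilates $l$, because it is built only from $s^\flat$ and $\alpha^\flat$.
\item On a leaf, the only surviving component is therefore $2\sigma(s,s)=g(\nabla_s l,s)-g(\nabla_\alpha l,\alpha)$.
\item The vanishing of this component is the equipartition condition verbatim.
\end{itemize}
Relying on eq.(\ref{sigmaexp}) is safe, even though the paper's intermediate formulas for $B_{\mu\nu}$ and $\sigma$ have normalization glitches. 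One checks directly that $B=\sum_{A,B} g(\nabla_{e_A}l,e_B)\,e_A^\flat\otimes e_B^\flat$, with $e_A\in\{s,\alpha\}$, has trace-free symmetric part equal to the right-hand side of eq.(\ref{sigmaexp}) divided by $2$.

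Your route also shows that the equivalence is purely kinematic. It holds for any null foliation with an adapted frame and uses no force-free input. Theorem \ref{existunifianl} is only needed if you want to read the corollary as ``a null force-free field exists on the foliation iff the pulled-back shear vanishes.''

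Two small points:
\begin{itemize}
\item $g(s,l)=0$ is not among the relations listed in eq.(\ref{innerprod}). It comes from the definition of a null foliation, where $s$ is orthogonal to $l$.
\item The pull-back of a tensor to a leaf is automatically well defined, since it is just restriction to tangent vectors. So your last paragraph's concern amounts only to the reduction to $\sigma(s,s)$ via $\sigma(l,\cdot)=0$, which you had already established.
\end{itemize}
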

{\bf Proof}. This follows immediately from theorem \ref{existunifianl}.

\hfill $\blacksquare$

From eq.(\ref{sigmaexp}), $l$  admits a uniform equipartition of null mean curvature if and only if $\sigma$ vanishes.
Then, as an immediate consequence of theorems \ref{equigen} and \ref{involthem}, we obtain the following theorem.

\begin{theorem}
Let $l$ be a shear-free null geodesic congruence. Then, as before (theorem \ref{involthem}) , there exists an arbitrary function of three variables such that any choice of such a function will generate a null field sheet foliation. Additionally, each unique foliation will be associated with a null force-free field that further contains an arbitrary function of two variables.
\end{theorem}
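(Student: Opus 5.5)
The plan is to reduce the statement to Theorem \ref{involthem} by showing that shear-freeness of $l$ gives exactly its hypotheses, and the key point is that this holds for \emph{any} orthonormal transverse pair. First fix $l$ and complete it to a tetrad $(s,l,\alpha,n)$ with the inner products of eq.(\ref{innerprod}). This is always possible locally: choose any null $n$ with $g(n,l)=-1$, then take an orthonormal basis $s,\alpha$ of the spacelike $2$-plane orthogonal to both $l$ and $n$. At this stage we do not ask that the span of $l$ and $s$ be involutive. The transverse metric is $h=s\otimes s+\alpha\otimes\alpha$, and eq.(\ref{sigmaexp}) writes $2\sigma$ in this frame.

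Next I use that $s\otimes s-\alpha\otimes\alpha$ and $s\otimes\alpha+\alpha\otimes s$ are linearly independent symmetric tensors. Hence $\sigma=0$ holds if and only if both coefficients in eq.(\ref{sigmaexp}) vanish:
$$g(\nabla_s l,s)=g(\nabla_\alpha l,\alpha), \qquad g(\nabla_s l,\alpha)+g(\nabla_\alpha l,s)=0 .$$
These are precisely the hypotheses of Theorem \ref{involthem}. A shear-free $l$ therefore satisfies equipartition and uniform equipartition simultaneously, for every admissible choice of $(s,\alpha)$. In particular, we do not even need the rotation of Theorem \ref{equigen}; that theorem is only consistent with this.

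Now apply Theorem \ref{involthem}. Go to a local chart with $l=\partial/\partial x^1$ and solve the transport equation $l(\Theta)=-g([l,s],\alpha)$. This gives eq.(\ref{rotchoice}), where $\tilde\Theta(x^2,x^3,x^4)$ is arbitrary; this is the arbitrary function of three variables. For each such $\Theta$, the pair $(l,\hat s)$ with $\hat s=\cos\Theta\, s+\sin\Theta\,\alpha$ is involutive. By Theorem \ref{hatsoltheo}, which uses uniform equipartition, the resulting foliation is a null field sheet foliation. Finally, Theorem \ref{existunifianl} gives the force-free fields $F=u\,\kappa\,\hat\alpha^\flat\wedge l^\flat$, where $u=u(x^3,x^4)$ is an arbitrary function of two variables constant on the leaves.

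I expect the only point needing care to be that Theorem \ref{hatsoltheo} is stated for a frame that is already a null field sheet foliation, meaning the span of $l$ and $s$ is integrable. Our starting $(l,s)$ need not be involutive. I would handle this by noting that the rotation argument only uses the pointwise identities $g(\nabla_{\hat s}l,\hat s)=g(\nabla_{\hat\alpha}l,\hat\alpha)$. These follow directly from the rotation formulas in the proof of Theorem \ref{equigen} once both shear components vanish. In other words, shear-freeness is frame-independent, so equipartition holds for $(\hat s,\hat\alpha)$ automatically, and involutivity of the original pair is never required.
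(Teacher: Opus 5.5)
Your proposal is correct and is essentially the paper's argument: eq.~(\ref{sigmaexp}) shows that $\sigma=0$ is equivalent to equipartition together with uniform equipartition, after which Theorem~\ref{involthem} supplies the three-variable freedom $\tilde\Theta$ in the field sheet foliations and Theorem~\ref{existunifianl} the two-variable freedom $u$. Your two additional remarks are correct refinements of the paper's brief appeal to Theorems~\ref{equigen} and \ref{hatsoltheo}: Theorem~\ref{equigen} is not actually needed, and equipartition for the rotated pair $(\hat s,\hat\alpha)$ follows pointwise from the rotation formulas, so the initial pair $(l,s)$ never has to be involutive.
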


All known exact null force-free solutions appear to be generated by a shear-free congruence (see \cite{Menon:2020hdk} for the well-known solution in Kerr, and an additional solution in FLRW spacetime). This begs the question as to whether all null force-free solutions are generated by a shear-free congruence? It appears not to be the case. The following theorem provides the necessary criterion.

\begin{theorem}
Let $(s, l, \alpha, n)$ be a foliation-adapted frame for a null foliation  ${\cal F}_{2,N}$. Then the null geodesic congruence $l$ associated to this foliation is shear-free if and only if $(l, \alpha)$ is also an involutive pair.
\end{theorem}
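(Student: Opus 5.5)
The plan is to reduce the statement to the identity relating the two commutator components $g([l,s],\alpha)$ and $g([l,\alpha],s)$. This is the same identity already used in the proof of theorem \ref{involthem}, but here I read it in the opposite direction. Shear-freeness of $l$ means, by eq.(\ref{sigmaexp}), that both coefficients vanish:
$$g(\nabla_s l,s)-g(\nabla_\alpha l,\alpha)=0 \qquad\text{and}\qquad g(\nabla_s l,\alpha)+g(\nabla_\alpha l,s)=0 .$$
The diagonal condition is exactly equipartition. I will take it as part of the setting, i.e. ${\cal F}_{2,N}$ is understood to be a null field sheet foliation, as is implicit in the surrounding discussion of foliations attached to null force-free fields. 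By the preceding corollary, it is the vanishing of $\sigma$ pulled back to the leaves. So the content of the theorem is: given that $(l,s)$ is involutive and equipartition holds, $(l,\alpha)$ is involutive iff the off-diagonal (uniform equipartition) term vanishes. If equipartition is not assumed, only the off-diagonal half of the equivalence survives, and I would state this caveat explicitly.

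First I characterize involutivity of $(l,\alpha)$ in the frame $(s,l,\alpha,n)$. Expand $[l,\alpha]$ in this frame. Its $n$-component is $-g([l,\alpha],l)$, and this vanishes by the computation at the start of the proof of theorem \ref{involthem}:
$$g([l,\alpha],l)=-g(\alpha,\nabla_l l)-\tfrac12\,\alpha\big(g(l,l)\big)=0,$$
using that $l$ is geodesic and null. Hence $(l,\alpha)$ is involutive iff $g([l,\alpha],s)=0$. Likewise, the hypothesis that $(l,s)$ spans the tangent spaces of the leaves gives $g([l,s],\alpha)=0$.

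Next I compute the sum of the two commutator components using torsion-freeness and metric compatibility. From $g(s,\alpha)=0$ we get $g(\nabla_l s,\alpha)=-g(\nabla_l\alpha,s)$, and therefore
$$g([l,s],\alpha)+g([l,\alpha],s) = -\Big(g(\nabla_s l,\alpha)+g(\nabla_\alpha l,s)\Big).$$
Since the first term on the left vanishes by involutivity of $(l,s)$, this yields
$$g([l,\alpha],s)=-\Big(g(\nabla_s l,\alpha)+g(\nabla_\alpha l,s)\Big),$$
which is precisely the off-diagonal coefficient in eq.(\ref{sigmaexp}).

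Combining the two steps: $(l,\alpha)$ is involutive iff the off-diagonal coefficient vanishes. Together with equipartition, which kills the diagonal coefficient, this holds iff $\sigma=0$ by eq.(\ref{sigmaexp}). The computations are routine. The only delicate point is the bookkeeping of which hypothesis supplies the diagonal part of $\sigma$: the involutivity data alone control only the off-diagonal part. That is why equipartition must be taken from the field-sheet assumption, and I expect clarifying this to be the main (conceptual rather than computational) obstacle.
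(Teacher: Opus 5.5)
Your proposal is correct and is essentially the paper's own proof: you reduce involutivity of $(l,\alpha)$ to $g([l,\alpha],s)=0$ and, using $g([l,s],\alpha)=0$ together with $l\big(g(s,\alpha)\big)=0$, identify this with the off-diagonal coefficient $g(\nabla_s l,\alpha)+g(\nabla_\alpha l,s)$ of eq.~(\ref{sigmaexp}); the paper writes the commutator as $[\alpha,l]$, hence its opposite sign. Your caveat is also right, and sharper than the paper, whose proof uses equipartition only tacitly: without equipartition the ``if'' direction fails. For example, take $l=\partial_t+\partial_\rho$, $s=\rho^{-1}\partial_\varphi$, $\alpha=\partial_z$ in cylindrical Minkowski coordinates; both $(l,s)$ and $(l,\alpha)$ are involutive, yet $2\sigma=\rho^{-1}(s\otimes s-\alpha\otimes\alpha)\neq 0$.
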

{\bf Proof}.
As in the proof of theorem \ref{involthem}, it is easy to see that $g([\alpha,l], l) =0$. It remains to look for the implications of $g([\alpha,l], s)$.

Since $(l,s)$ is an involutive pair, we have that
$$0=g([l,s], \alpha) = g(\nabla_l s, \alpha)-g(\nabla_s l, \alpha)\;.$$

Then 
$$g([\alpha,l], s)= g(\nabla_\alpha l,s)-g(\nabla_l \alpha,s)=g(\nabla_\alpha l,s) + g(\nabla_l s, \alpha) = g(\nabla_\alpha l,s)+g(\nabla_s l, \alpha)\;.$$

The last equality on the right-hand side comes from the equation above it.

\hfill $\blacksquare$

In section \ref{CMetric}, to illustrate the generality of our methods, we construct a null FFE solution in the C-metric. This was done with relative ease since we have a recipe whenever there is a shear-free null congruence.
\subsection*{Example 4}
As mentioned previously, the current catalog of null and force-free solutions has all been generated by a shear-free null geodesic. The following example clarifies the point that we do not need $l$ to be shear-free. 

Consider the following null geodesic congruence in Minkowski spacetime:

\begin{equation}
l=\partial_t+\sin\xi\; \partial_x+\cos\xi \; \partial_y\;.
\label{shearl}
\end{equation}
Here $\xi \equiv \xi(z)$ is an arbitrary function of $z$.  Set
\begin{equation}
 s=\tan\xi \;\partial_t + \sec \xi \; \partial_x\;,\;\;\;{\rm and}\;\;\;\;\alpha = \partial_z.   
 \label{st4}
\end{equation}

Then the equipartition and involutivity requirements hold, and we have that
\begin{align}
        F&=(u\cdot \kappa)\:l^\flat \wedge \alpha^\flat \nonumber\\
        &=f(\Phi,z)\:\left(dt \wedge dz-\sin{\xi}\:dx \wedge dz-\cos{\xi}\:dy\wedge dz\right)
        \label{sheary}
\end{align}
is a null and force-free solution. Here $f\equiv f(\Phi,z)$ where, $\Phi \equiv t-x\sin\xi(z)-y\cos\xi(z)$. The electric and magnetic fields are given by
\[
\mathbf{E} = -f(\Phi,z)\,\hat{\mathbf z},\qquad
\mathbf{B} = f(\Phi,z)\bigl(-\cos\xi(z)\,\hat{\mathbf x} + \sin\xi(z)\,\hat{\mathbf y}\bigr),\qquad
\]
and the Poynting vector is given by
\[\mathbf{S} = f^2(\Phi,z) \bigl(\sin\xi(z)\,\hat{\mathbf x} + \cos\xi(z)\,\hat{\mathbf y}\bigr).\]

This is a source-supported electromagnetic field whose energy flows horizontally in the xy-plane while the propagation direction rotates with height z. Here, for $l$, the shear tensor is given by
$$\sigma = \frac{1}{2}\;\frac{d\xi}{dz}\; (s \otimes \alpha + \alpha \otimes s)\;.$$
Since $l$ is not shear-free, no other rotation of $s$ and $\alpha$ will generate a further solution, and so we only have a $2$-parameter degree of freedom here; in contrast to the previous example.

This is the first solution, as far as the authors know, where we have a null FFE solution wherein the current density vector is not proportional to the null geodesic congruence (see section \ref{ShearyAPP}). This is also the first null FFE solution that we know where the null congruence is not shear-free. This connection is, in fact, a general one, as is shown in the following result.

\vskip0.2in
\begin{theorem}
    Let $F$ be a null force-free solution associated to the null geodesic congruence $l$. Then the current density vector $j$ is along $l$ iff $l$ is shear-free.
\end{theorem}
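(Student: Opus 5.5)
The plan is to reduce the statement to the previous theorem, which says that $l$ is shear-free iff $(l,\alpha)$ is an involutive pair. To do this I will compute the single component of $j$ that could obstruct proportionality to $l$. Fix a foliation-adapted frame $(s,l,\alpha,n)$ for the null field sheet foliation of $F$. By theorem \ref{existunifianl}, $F=\phi\,\alpha^\flat\wedge l^\flat$ with $\phi=u\kappa$, and I will work where $\phi\neq 0$, i.e.\ where $F$ does not vanish.

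First I use the force-free condition to localize $j$. Since $F(j^\sharp,\cdot)=0$, the vector $j^\sharp$ lies in $\ker F=\mathrm{span}(l,s)$. Hence $g(j^\sharp,l)=g(j^\sharp,\alpha)=0$ automatically, and $j^\sharp$ is along $l$ iff $g(j^\sharp,s)=0$. Next I write the dual field. For this null $F$, $*F=\psi\, l^\flat\wedge s^\flat$ with $\psi=\pm\phi$; I will fix the sign by checking against the orientation $\epsilon(s,l,\alpha,n)$, but only $\psi\neq0$ matters. Since $j^\flat=\pm * d * F$, the component $j^\flat(s)$ equals, up to sign and orientation factors, the 3-form $d(*F)$ evaluated on the complementary triple, i.e.\ $d(*F)(l,\alpha,n)$.

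The main computation is $G:=*F$ in the invariant formula
$$dG(X,Y,Z)=X\,G(Y,Z)+Y\,G(Z,X)+Z\,G(X,Y)-G([X,Y],Z)-G([Y,Z],X)-G([Z,X],Y)$$
with $(X,Y,Z)=(l,\alpha,n)$. Both $l$ and $\alpha$ are annihilated by $l^\flat$ and $s^\flat$, so $i_lG=i_\alpha G=0$. Consequently all three derivative terms vanish, and so do the bracket terms $G([\alpha,n],l)$ and $G([n,l],\alpha)$. The only survivor is
$$-G([l,\alpha],n)=-\psi\big(l^\flat([l,\alpha])\,s^\flat(n)-s^\flat([l,\alpha])\,l^\flat(n)\big)=-\psi\, g([l,\alpha],s),$$
using $g(s,n)=0$ and $g(l,n)=-1$. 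Therefore $g(j^\sharp,s)$ is a nonzero multiple of $\psi\, g([l,\alpha],s)$, and with $\psi\neq 0$ we get: $j$ is along $l$ iff $g([l,\alpha],s)=0$.

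Finally, as in the proof of theorem \ref{involthem}, $g([l,\alpha],l)=0$ holds automatically since $l$ is null and geodesic. So $g([l,\alpha],s)=0$ is exactly the statement that $(l,\alpha)$ is involutive, and by the preceding theorem this holds iff $l$ is shear-free. The main obstacle I expect is the Hodge-dual bookkeeping: verifying that $*(\alpha^\flat\wedge l^\flat)$ is proportional to $l^\flat\wedge s^\flat$ with a nonvanishing factor, and that $(*\beta)(s)$ for a 3-form $\beta$ reduces to $\pm\beta(l,\alpha,n)$ in the null-adapted frame, where $g(l,n)=-1$ introduces signs. I would settle both by evaluating $\epsilon$ on the tetrad, noting that only nonvanishing of the proportionality constants is needed. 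As a sanity check, Example 4 has $g([l,\alpha],s)\propto \xi'(z)\neq 0$, matching its non-vanishing shear.
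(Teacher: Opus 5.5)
Your proof is correct and is essentially the paper's argument. The paper reads off the $s$-coefficient $ds^\flat(l,\alpha)$ of $j$ from the formula quoted from earlier work, and since $ds^\flat(l,\alpha)=-g([l,\alpha],s)$, your Hodge-dual computation rederives exactly that coefficient from first principles. The preceding theorem you invoke then does the paper's conversion of it, via involutivity of $(l,s)$, into $g(\nabla_s l,\alpha)+g(\nabla_\alpha l,s)$.

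You should state explicitly that the final step also uses equipartition. That theorem's proof only controls the off-diagonal part of $\sigma$ in eq.~(\ref{sigmaexp}), and for an arbitrary null foliation its conclusion fails: $l=\partial_t+\partial_R$, $s=\partial_y$, $\alpha=R^{-1}\partial_\vartheta$, with $(R,\vartheta)$ polar coordinates in the $xz$-plane of Minkowski space, gives two involutive pairs but $\sigma\neq 0$. You do have equipartition, because the foliation of a null force-free $F$ is a null field sheet foliation, which is exactly how the paper closes its argument.
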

{\bf Proof}. In \cite{Menon:2020hdk}, it was shown that the current density vector is given by the expression

$$j=-(u \cdot \kappa)\;
  \left[\Big(\alpha(\ln u \cdot \kappa)+\;dl^\flat(n,\alpha)+ ds^\flat(\alpha, s)\Big)\; l+\;ds^\flat(l,\alpha)\;s\right]\;.$$
 On the other hand
 $$0=ds^\flat(l,\alpha)= l^\mu \alpha^\nu \left[\nabla_\mu s_\nu-\nabla_\nu s_\mu \right]=g(\nabla_l s, \alpha)-g(\nabla_\alpha s,l)=g(\nabla_s l, \alpha)+g(\nabla_\alpha l,s)\;.$$
 
 The last equality in the right-hand side above uses the involutivity of the kernel of $F$. Since $F$ is a solution, the equipartition condition is satisfied. The result then follows from eq.(\ref{sigmaexp}).
 
\hfill $\blacksquare$

We see from the above theorem that null and vacuum solutions must be generated by shear-free null congruences. This immediate result is one half of Robinson's theorem, see \cite{Robinson1961NullEM} \cite{Trautman1985SimpleRobinson} and \cite{Tafel1985}, which states that null and vacuum solutions of Maxwell's equations exist if and only if spacetime admits shear-free null geodesic congruences. The remainder of the Robinson theorem shows that there are choices for $u$ such that
$$\alpha(\ln u \cdot \kappa)+\;dl^\flat(n,\alpha)+ ds^\flat(\alpha, s)=0\;,$$ rendering $j=0.$
Furthermore, if $l$ has shear, we cannot get a vacuum solution. This is evident in Example 4, where $j$ is given in eq.(\ref{shearyj}). In particular, for a vacuum solution, we require 
that $\xi^\prime =0$, in which case $\sigma =0$.

\tikzset{
base/.style={
  rectangle,
  minimum height=1cm,
  draw=black,
  text width=5.6cm,
  align=center
},
startstop/.style={base, rounded corners, fill=green!20},
process/.style={base, fill=blue!20},
decision/.style={base, fill=yellow!20},
output/.style={base, rounded corners, fill=teal!25},
arrow/.style={
  thick,
  ->,
  shorten >=2pt,
  shorten <=2pt,
  rounded corners=3pt
}
}

\newcommand{\maxresult}{
Shear-free congruence\\
$\Rightarrow$ field sheet foliations generated by $\Theta(x^1,x^2,x^3)$\\
$\Rightarrow$ each foliation has solutions with $2$ degrees of freedom
}

\newcommand{\uniqueresult}{
Field sheets exist\\
$\Rightarrow$ each foliation has solutions with $2$ degrees of freedom
}

\newcommand{\noresult}{
No field sheets\\
$\Rightarrow$ no null force-free solution
}
\begin{figure}[htbp]
    \centering
    \begin{tikzpicture}[node distance=1.6cm, scale=0.85, every node/.style={transform shape}]
        \node (start) [startstop,fill=teal!25]
        {Given a null geodesic congruence $l$,\\
        $(\alpha,s)$ spacelike unit vectors orthogonal to $l$};

        \node (shear) [decision, below=of start]
        {Is $l$ shear-free?};

        \node (shearYes) [output, right=2cm of shear, fill=green!15]
        {\maxresult};

        \node (equip) [decision, below=of shear,fill=yellow!25]
        {Equipartition satisfied?\\
        $g(\nabla_s l,s)=g(\nabla_\alpha l,\alpha)$};
        \draw[arrow] (shear.south) -- node[left]{No} (equip.north);

        \node (rotateE) [process, below=of equip, fill=yellow!25]
        {Rotate $(s,\alpha)$ to enforce equipartition and obtain $(\hat s,\hat \alpha)$};

        \draw[arrow] (equip.south) -- node[left]{NO} (rotateE.north);
        \node (invol) [decision,  right=2.0cm of equip]
        {Is the pair $(l,s)$ or $(l,\alpha)$ involutive?};

        \node (involR) [decision, below=of rotateE]
        {Is the pair $(l,\hat{s})$ or $(l,\hat{\alpha})$ involutive?};

        \node (good1) [output, below=of invol,fill=green!16]
        {\uniqueresult};

        \node (good2) [output, below=of involR,fill=green!16]
        {\uniqueresult};

        \node (bad1) [output, right=2cm of invol,fill=red!25]
        {\noresult};

        \node (bad2) [output, right=2cm of involR,fill=red!25]
        {\noresult};
        \draw[arrow] (start) -- (shear);
        \draw[arrow]
          (shear.east)
          |- node[pos=0.25,above,xshift=25pt]{YES}
          (shearYes.west);
        \draw[arrow] (equip.east) -- node[above]{YES} (invol.west);
        \draw[arrow] (rotateE.south) -- (involR.north);
        \draw[arrow] (invol.south) -- node[left]{YES} (good1.north);
        \draw[arrow] (invol.east) -- node[midway, above, xshift=2mm]{NO} (bad1.west);
        \draw[arrow] (involR.south) -- node[left]{YES} (good2.north);
        \draw[arrow] (involR.east) -- node[midway, above, xshift=2mm]{NO} (bad2.west);
    \end{tikzpicture}
    \caption{Flowchart summarizing the geometric construction of null force-free electromagnetic fields from a prescribed null geodesic congruence. The procedure proceeds through shear check, frame rotation for equipartition, involutivity verification, and solution generation.}
    \label{fig:flowchart}
\end{figure}
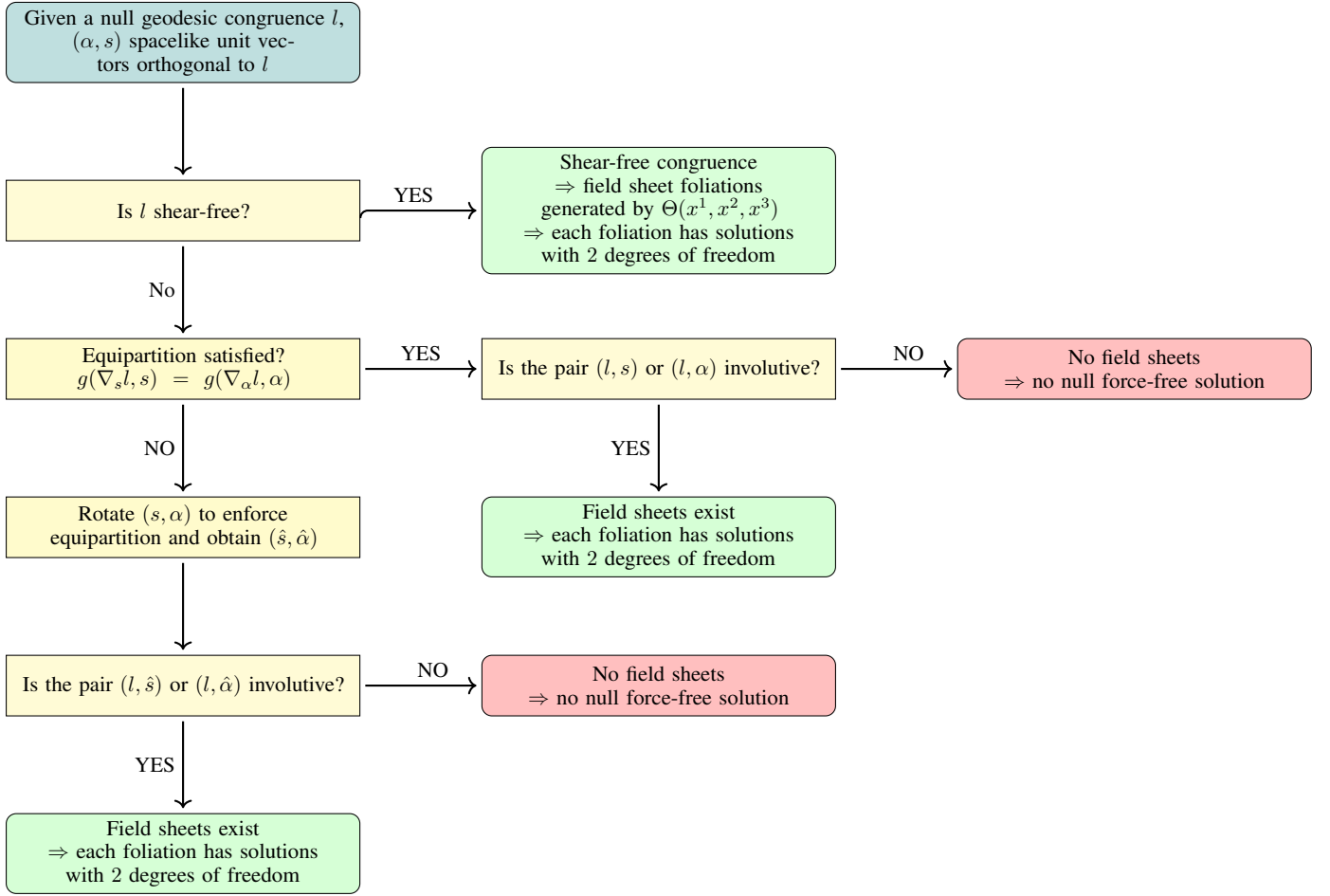

\section{Conclusion}
In this work, we developed a geometric framework that characterizes the existence of null force-free electromagnetic fields in terms of the properties of null geodesic congruences. By reversing the usual direction of analysis, the nonlinear force-free equations are replaced by transparent geometric requirements, equipartition of null mean curvature, and involutivity of the associated field sheet distribution, which can be addressed systematically at the level of congruence geometry rather than through direct solution of partial differential equations. We showed that equipartition is always achievable locally by an appropriate rotation of the transverse spacelike frame and therefore does not constitute an intrinsic obstruction to solution existence. The involutivity condition plays a more subtle role: uniform equipartition provides a sufficient condition for involutivity, guaranteeing the existence of an arbitrary function of three variables such that any choice of such a function will generate a null field sheet foliation. Additionally, each unique foliation will be associated with a null force-free field that further contains an arbitrary function of two variables. More generally, involutivity can persist even in the absence of uniform equipartition, though the resulting solution space is typically more constrained. A central outcome of our analysis is the identification of the shear tensor as the invariant governing this structure. Shear-free null geodesic congruences automatically admit maximal families of null force-free solutions, while nonzero shear restricts but does not eliminate the space of admissible fields and allows for configurations in which the current density is not aligned with the null direction.

The explicit examples presented in curved and flat spacetimes illustrate these possibilities, including a genuinely shearing null congruence in Minkowski space that nevertheless supports a null force-free solution. The construction procedure is summarized in the schematic flowchart of Figure~\ref{fig:flowchart}, which provides a practical algorithm for generating null force-free fields. Starting from a prescribed null geodesic congruence, the flowchart guides the reader through frame rotation to enforce equipartition, evaluation of involutivity, and systematic construction of null force-free fields whenever the geometric criteria are satisfied. Taken together, these results provide a systematic geometric characterization of the conditions under which null force-free electromagnetic fields arise from null geodesic congruences, and establish foliation geometry as a constructive tool for generating such solutions.
For readers who prefer the Newman--Penrose formalism, the Appendix reformulates these conditions in that language: equipartition is identified with the vanishing of \(\mathrm{Re}\,\sigma\), while uniform equipartition corresponds to the complete vanishing of the Newman--Penrose shear scalar \(\sigma\).
\printbibliography
\section{Appendix}
\subsection{Illustrative Features Contained in Example 4}
\label{ShearyAPP}
Considering the symmetry contained in the expression for $l$ in eq.(\ref{shearl}), it is natural to ask if we could have chosen 
$$\tilde s=\cot \xi\;\partial_t\;+\csc \xi\;\partial_y\;,\;\;\;{\rm and}\;\;\;\;\alpha = \partial_z$$
as a transverse basis in eq.(\ref{st4}). The answer is affirmative, since $s$ and $\tilde s$ only differ by a term proportional to $l$, which is a freedom contained in the theory of null geodesics. More precisely, 
$$\tilde s = -s + \frac{2}{\sin 2\xi} \;l\;.$$
Nonetheless, the solution given by eq.(\ref{sheary}) remains unchanged. This is so because the span of $(l,s)$ is the same as the span of $(l,\tilde s)$.
\vskip0.2in
It was noted earlier in example $3$ that, in general, and especially when the null congruence is shear-free, the expression for $\kappa$ is not easily obtained in a closed form. One reason for this is that we have a large class of foliations generated by rotations by angles $\Theta (x^2,x^3,x^4)$. However, when the null congruence has shear, there is only one possible choice of foliation. This is exactly the case in example $4$. So here we will briefly describe how we can obtain the functions $u$ and $\kappa$. Since $u$ is any smooth function that is constant on a field sheet, we must have that
$l (f)=0= s(f).$
This is satisfied by 
$f=f(\Phi ,z)$, and so we set set $u \equiv f$. It now remains to be shown that $\kappa =1$. It is easily seen that 

$$(x^1 = x, \;\;x^2 = y,\;\; x^3 =\Phi = t-x \sin \xi - y \cos \xi,\;\; x^4 =z)$$

is an adapted chart for our formalism. 
In the case of this example, eq.(\ref{chart2frame}) becomes
\begin{equation}
  \left(
           \begin{array}{c}
             \alpha^\flat \\
             l^\flat \\
           \end{array}
         \right)=\left(
    \begin{array}{cc}
      0 & 1\\
    -1 & (x \cos \xi-y \sin \xi ) \xi^\prime \\
    \end{array}
  \right)\left(
           \begin{array}{c}
             dx^3 \\
             dx^4 \\
           \end{array}
         \right)\;,   
\end{equation}
where $\prime$ denotes derivative with respect to $z$, and hence $\kappa =1$ as required. The charge current density vector field in this case is given by 

\begin{equation}
    j = f(x^3,z)  \Big( \zeta \; l - \xi'(z)\; \tilde s \Big)\;,
    \label{shearyj}
\end{equation}

$$
\zeta = \dfrac{
\xi'(z)\,(x \cos \xi(z) - y \sin \xi(z)) \, \dfrac{\partial f}{\partial x^3}(x^3, z)
- u(x^3, z) \, \tan \xi(z) \, y
- \dfrac{\partial f}{\partial z}(x^3, z)
}{f(x^3, z)}.
$$

Clearly, when the shear tensor of $l$ is non-trivial, $j$ is not along $l$; it also has a component along the  $\tilde s$ direction.
\vskip0.2in

\subsection{The Shear-Free Geodesic Congruence in the C-Metric and the Associated Null FFE Solution}
\label{CMetric}
As a concrete nontrivial example, we consider the C-metric in the Hong–Teo form introduced in \cite{Griffiths:2006tk}, which is a vacuum solution to the Einstein field equations and describes a pair of uniformly accelerating black holes. In this coordinate system, the spacetime admits a distinguished principal null direction that is geodesic and shear-free. 

Using this congruence, we demonstrate that the shear-free property alone is sufficient to support a family of null force-free electromagnetic fields. By our general construction, shear-free implies uniform equipartition and involutivity of the field-sheet distribution, ensuring the existence of a null force-free field generated by the geodesic congruence. This example highlights that the shear-free mechanism is not restricted to the familiar stationary geometries and flat space, but applies equally well to accelerating spacetimes.
Following \cite{Griffiths:2006tk}, the C-metric may be written in the Hong--Teo coordinates $(t,x,y,\varphi)$ as
$$\mathrm{d}s^2 = \frac{1}{A^2 (x+y)^2} \Big[ - H(y)\, \mathrm{d}t^2 + \frac{\mathrm{d}y^2}{H(y)} 
+ \frac{\mathrm{d}x^2}{G(x)} + G(x)\, \mathrm{d}\varphi^2\Big]\;,$$
where
$$G(x) = 1 - x^2 - 2 A M x^3\;,\;\;\;{\rm}\;\;\;H(y) = y^2 - 1 - 2 A M y^3\;,$$
and  $A, M$ are parameters of the geometry. The null, shear-free geodesic in this case is given by

$$l =(x + y)^2 \left( H(y)^{-1} \,\partial_t
  + \,\partial_y\right)\;.$$

As usual, define two spacelike vector fields orthogonal to $l$ by

$$s = (x+y)\, A \, \sqrt{G(x)} \;\partial_x\;,\;\;\;{\rm and}\;\;\;\alpha = - \frac{(x+y)\, A}{\sqrt{G(x)}} \;\partial_\varphi\;.$$

Set,
$$\beta = - \int H^{-1} dy+ t\;,$$
then $l(\beta)=0$. Notice that $\beta$ may not be globally defined because it is the integral of the inverse of a cubic function with possible roots. The null and force-free field corresponding to our configuration is given by

$$F = (x+y)\; \left( l^\flat \wedge (A(\beta,x,\varphi)\; \alpha^\flat + B(\beta,x,\varphi)\; s^\flat) \right)\;,$$
 where $A(\beta, x, \varphi)$ is an arbitrary function of its arguments (of three variables), and 
$$B(\beta,x,\varphi) = \int \!\Big[ (2 A M x^3 + x^2 - 1)\,\partial_\varphi A(\beta, x, \varphi)
      + (3 A M x^2 + x)\, A(\beta, x, \varphi) \Big] \,\mathrm{d}\varphi
      + \tilde B(\beta, x)\;.$$
$B(\beta,x,\varphi)$, as expected, contains an additional two degrees of freedom in $\tilde B(\beta, x)$.
Since the geodesic congruence is shear-free, the current is necessarily along $l$.

%
%

\subsection{Equipartition and Uniform Equipartition in the Newman-Penrose Dictionary}

Define  complex polarization vectors by
\begin{equation}
m := \frac{s + i\alpha}{\sqrt{2}}, 
\qquad 
\bar{m} := \frac{s - i\alpha}{\sqrt{2}}.
\end{equation}

We adopt the standard Newman--Penrose definitions
\begin{align}
\rho &:= -m^a \bar{m}^b \nabla_b \ell_a, \\
\sigma &:=- m^a m^b \nabla_b \ell_a,
\end{align}
so that \(\mathrm{Re}\,\rho\) is the expansion and \(\mathrm{Im}\,\rho\) is the twist. Expanding and collecting real and imaginary parts, the complex shear $\sigma$ is given by
\begin{align}
\sigma 
&= -\frac{1}{2} \left[ g(\nabla_s l, s) - g(\nabla_\alpha l, \alpha) \right]
   - \frac{i}{2} \left[ g(\nabla_s l, \alpha) + g(\nabla_\alpha l, s) \right].
\end{align}
I.e., the equipartition condition is given by 
$$\mathrm{Re}\,\sigma = 0\;,$$
and the uniform equipartition condition is the additional constraint
$$\mathrm{Im}\,\sigma = 0\;,$$
thereby rendering $\sigma$ trivial.

\end{document}